\newtheorem{theorem}{Theorem}[section]
\newtheorem{lemma}[theorem]{Lemma}
\newcommand{\newFunc}[2]{\newcommand{#1}{\FuncSty{#2}\xspace}}
\newcommand{\concept}[1]{\textbf{#1}}
\newcommand{\etal}{\textit{et al.}}
\newFunc{\Read}{read}
\newFunc{\Write}{write}
\newFunc{\ShiftLeft}{shiftLeft}
\newFunc{\ShiftRight}{shiftRight}
\newcommand{\Set}[1]{\left\{{#1}\right\}}
\newcommand{\Tuple}[1]{\left\langle{#1}\right\rangle}
\DeclarePairedDelimiter{\card}{\lvert}{\rvert}
\DeclarePairedDelimiter{\parens}{(}{)}
\title{The consensus number of a shift register\\equals its width}
\author{James Aspnes\\Yale University}
\date{\quad}
\begin{document}

\maketitle

\begin{abstract}
    The consensus number of a $w$-bit register supporting logical
    left shift and right shift operations is
    exactly $w$, giving an example of a class of types, widely
    implemented in practice, that populates all levels of
    the consensus hierarchy. This result generalizes to $w$-wide shift registers
    over larger alphabets.
    In contrast,
    a register providing arithmetic right shift, which replicates the
    most significant bit instead of replacing it with zero,
    is shown to solve consensus for any fixed number of processes as long as
    its width is at least two.
\end{abstract}

\section{Introduction}
\label{section-introduction}

Herlihy's consensus
hierarchy~\cite{Herlihy1991waitfree,Jayanti1997}
classifies each shared-memory object type $T$ by its \concept{consensus number},
the maximum number of processes $n$ for which it is possible to solve
wait-free consensus using an unbounded number of objects of type $T$
along with an unbounded number of read/write registers.
We show that logical shift registers of varying width cover all finite
levels of the consensus hierarchy, by showing that $w$-bit registers
have consensus number $w$.

The remainder of this section provides a brief background on the shared-memory
model and consensus numbers. Section~\ref{section-shift-registers}
formally defines the class of objects under consideration.
Section~\ref{section-logical-shift-registers} demonstrates that the consensus
number of a $w$-bit logical shift register is shown to be exactly $w$, using a
characterization of consensus numbers due to Ruppert~\cite{Ruppert2000}; this result also applies to
$w$-wide logical shift registers over larger alphabets.
Section~\ref{section-arithmetic-shift-registers} considers
arithmetic shift registers, where right-shift
operations replicate the leftmost bit instead of replacing it with a
zero, and shows that these objects have infinite consensus number for
any width $w≥2$. Section~\ref{section-conclusions} gives a brief
discussion of these results and possible future work.

\subsection{Shared-memory model}

We consider the standard asynchronous shared-memory model, where a
collection of $n$ \concept{processes} interact by operating on shared
\concept{objects}. Both the processes and objects are represented as
finite-state machines, and a \concept{configuration} of the system
specifies the states of all processes and objects. In each
configuration, each process has a pending \concept{operation} on some
object. An \concept{adversary scheduler} chooses which of the $n$
pending operations to execute next. Executing an operation updates the
state of the object to which it is applied, and returns a value that
is used to update the state of the executing process.

A standard object type is an \concept{read/write register}, which
supports a $\Read$ operation that returns its current state, and, for
each value $v$, a $\Write(v)$ operation that updates its state to
$v$ and returns nothing.

Other object types may have more complicated behavior.
A fundamental problem in the theory of shared-memory systems is
classifying the relative strength of different object types.

\subsection{Consensus numbers}

\concept{Consensus} allows each of $n$ processes to provide
an input and obtain an common output after a finite number of steps.
Formally, the requirements are
\begin{description}
    \item[Agreement] All processes obtain the same output.
    \item[Termination] Each process finishes in a finite number of
        steps.
    \item[Validity] Each process obtains an output equal to some
        process's input.
\end{description}
A protocol is \concept{wait-free} if termination holds in all possible
schedules without any fairness constraints.

The \concept{consensus number}~\cite{Herlihy1991waitfree,Jayanti1997}
of a shared-memory object $T$ is the largest number of processes $n$
for which it is possible to solve wait-free consensus using an
unlimited number of $T$ objects supported by an unlimited number of
read/write registers. An object has consensus number $∞$ if it implements
consensus for any $n$. 

Consensus numbers create a hierarchy, where objects with
low consensus numbers cannot implement objects with higher consensus
numbers. This hierarchy is fully populated in the sense that for each
possible $m ∈ \Set{1,\dots,∞}$, there is an object with consensus
number $m$. Some parameterized families of objects that demonstrate this
property are $m$-restricted consensus objects, which provide an
operation that satisfies the requirements of consensus, but only to
the first $m$ processes that invoke it; and
$m$-sliding read/write registers or
$m$-buffers~\cite{MostefaouiPR2018,EllenGSZ2020}, which are similar to
standard read/write registers except that they return the last $m$ 
values written instead of just the last value written.
These families of objects are of significant theoretical interest, but
have the unfortunate deficiency of not appearing as a common feature 
in practical systems.

Proving that an object $T$ has consensus number $m$ for finite $m$ usually
involves two parts: showing that $T$ plus registers can solve
consensus for $n=m$, and that it cannot solve consensus for $n≥m+1$.
The first part can be demonstrated by giving a working algorithm; the
second usually involves a bivalence argument along the lines of the
classic Fischer-Lynch-Paterson impossibility
result~\cite{FischerLP1985} for asynchronous message passing, which
can be adapted to wait-free shared memory~\cite{LouiA1987}. The
paper of Herlihy introducing the consensus
hierarchy~\cite{Herlihy1991waitfree} gives many
examples of such arguments for specific shared-memory objects.

We will find it convenient to use a general characterization of the
consensus number of certain classes of shared-memory objects given by
Ruppert~\cite{Ruppert2000}, which has the advantage of only needing to
examine a finite class of possible executions on a single object.
We describe this characterization in
Section~\ref{section-logical-shift-registers}.

\section{Shift registers}
\label{section-shift-registers}

A shift register of width $w$ holds a string $x_{w-1} \dots x_0$,
where each $x_i$ is drawn from an alphabet that contains a special
null symbol $0$, and, to avoid triviality, at least one non-null symbol $1$.

Given such a string,
define one-position shift
operations $\ell$ (left shift), $r$ (logical right shift), and $s$ (arithmetic
right shift), written in postfix notation, where the effect of
each operation is given by the rules
\begin{align*}
    x\ell &= x_{w-2} x_{w-3} \dots x_0 \, 0 \\
    xr    &= 0\, x_{w-1} x_{w-2} \dots x_1 \\
    xs    &= x_{w-1} x_{w-1} x_{w-2} \dots x_1 
\end{align*}
These are destructive operations: $\ell$ removes $x_{w-1}$, while $r$
and $s$ remove $x_0$. This means that $\ell$ is not, in general, an
inverse of either $r$ or $s$.

We can also define $k$-fold versions of these operations:
\begin{align*}
    x\ell^k &= x_{w-k-1} \dots x_0 \, 0^k \\
    xr^k    &= 0^k \, x_{w-1} x_{w-2} \dots x_k \\
    xs^k    &= x_{w-1}^k x_{w-1} x_{w-2} \dots x_k 
\end{align*}

Shift operations on bits, even in their $k$-fold variants, are widely
available both as machine-level instructions and as operators in
common programming languages. For example, assuming that any return
value is ignored, $\ell^k$ corresponds to the left-shift operator
$\verb:<<=:k$ in C~\cite{KernighanR1988}, while $r^k$ corresponds to
the right-shift operator $\verb:>>=:k$ when applied to variables with
unsigned types. Applying $s^k$ corresponds to doing an arithmetic
right shift with sign extension. While C does not provide an operation
that is guaranteed to perform an arithmetic right shift, 
they are available at the machine-code level on most CPUs; for
example, the Intel architecture provides them as the Shift Arithmetic
Right (SAR) instruction~\cite{Intel1997}.

We consider two classes of registers supporting shift operations in
addition to $\Write(v)$ and $\Read$. \concept{Logical shift
registers}, examined in Section~\ref{section-logical-shift-registers}, provide 
$\ell^k$ and $r^k$ for all $k≥1$. \concept{Arithmetic shift
registers}, examined in Section~\ref{section-arithmetic-shift-registers},
instead provide $\ell^k$ and $s^k$. We will see that the choice between
$r^k$ and $s^k$ gives very different consensus numbers for these objects.

\section{Logical shift registers}
\label{section-logical-shift-registers}

Here we present the main result of the paper, that logical shift
registers have consensus number equal to their width. To prove this,
we use a characterization of consensus number for types supporting
$\Read$ operations due to Ruppert~\cite{Ruppert2000}.

A readable type $T$ is
\concept{$n$-discerning}~\cite[Definition 13]{Ruppert2000}
if there is some initial state $q^0$ of a
$T$-object, a partition of the processes $p_1,\dots,p_n$ into disjoint
nonempty teams $A$ and $B$, and a choice of update operation $π_i$ on
the object for each process $p_i$, such that the view of any process
in an execution starting in $q^0$ and consisting of operations $π_i$
starting with an operation by team $A$ is distinct
from the view of the same process in any such execution starting with an
operation by team $B$.

This is formalized by defining sets $R_{A,j}$
and $R_{B,j}$, where $R_{A,j}$ consists of all pairs $\Tuple{r,q}$ where
there is an execution $q^0 π_{i_1},π_{i_2,},\dots,π_{i_α}$
where the first operation $π_{i_1}$ is performed by some $p_{i_1}$ in
$A$, at least one of the operations is $p_j$'s
operation $π_j$, $π_j$ returns $r$ in the execution, and the object
is left in state $q$ at the end of the execution.
The set $R_{B,j}$ is defined similarly, the only difference being that $p_{i_1} ∈ B$.
Then a readable
type is $n$-discerning if there exist disjoint nonempty teams $A,B$ and operations
$π_i$ such that $R_{A,j}∩R_{B,j}=∅$ for all $j$.

The usefulness of this definition is that a readable type $T$
can solve $n$-process consensus for $n≥2$ when supplemented by read/write
registers if and only if it is $n$-discerning~\cite[Theorems~15
and~18]{Ruppert2000}. 
We will use this fact to establish the consensus
number of a $w$-bit logical shift register.

To show that a $w$-wide logical shift register is not $n$-discerning
for $n≥w+1$, we will need to show
that any choice of teams $A$ and $B$ and updates $π_i$ will
give $R_{A,j} ∩ R_{B,j} ≠ ∅$ for some $j$. Because no update operation
returns a value, this simplifies to showing that there always exist
executions
$q^0 π_{i_1},π_{i_2,},\dots,π_{i_α}$ and 
$q^0 π_{i'_1},π_{i'_2,},\dots,π_{i'_β}$ 
that leave the object in the same state, overlap in at least one operation $π_{i_a} = π_{i'_b} = π_j$,
and have $p_{i_1} ∈ A$ and $p_{i'_1} ∈ B$.

Our strategy for proving that such bad executions exist 
when $n ≥ w+1$ will be to show a sequence of lemmas
restricting what updates can be used, culminating in a combinatorial
argument showing that we can't have too many of the updates that are
left.

First let us observe that no $π_i$ can be a \Write:
\begin{lemma}
    \label{lemma-no-writes}
    If $R_{A,j} ∩ B_{A,j} = ∅$ for all $j$, then no $π_i$ is a $\Write$.
\end{lemma}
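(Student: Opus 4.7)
The plan is to prove the contrapositive: assume that some designated operation $π_i$ is a $\Write(v)$, and exhibit executions showing $R_{A,i} ∩ R_{B,i} ≠ ∅$. By the symmetry of the definition (swapping the roles of $A$ and $B$), I may assume without loss of generality that $p_i ∈ A$.

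The key observation I will exploit is that $\Write(v)$ is a ``blind'' update: it overwrites the object state regardless of history and returns no value, so it returns the same thing in every execution. Pick any $p_k ∈ B$ (possible because $B$ is nonempty), and consider the two executions $E_A = π_i π_k$ and $E_B = π_k π_i π_k$. Then $E_A$ starts with an operation by team $A$ and $E_B$ starts with an operation by team $B$. After $π_i$ executes in either sequence, the object is in state $v$ independent of what came before, so both executions end in the state $vπ_k$ obtained by applying $π_k$ to $v$. Since $π_i$ returns its fixed (null) write-value in both executions, the pair $\Tuple{r,q}$ consisting of $π_i$'s return value and the final state $vπ_k$ lies in both $R_{A,i}$ and $R_{B,i}$, yielding the desired nonempty intersection (with $j=i$).

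This exactly matches the simplified form of the discerning criterion given earlier in the section: two executions that end in the same state, overlap on at least one operation (namely $π_i$), and begin with operations from opposite teams. The main (and essentially only) subtlety is to verify that $π_k$ can legally appear in both executions and that using $j = i$ is admissible under Ruppert's definition; both follow immediately from the fact that the definition places no restriction on how many times a single $π_\ell$ may appear and explicitly allows $j$ to be the very process whose operation starts execution $E_A$. Once these formalities are checked, no further combinatorial argument is required, so I do not anticipate any real obstacle for this lemma.
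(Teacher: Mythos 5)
Your core idea is exactly the paper's: a $\Write$ obliterates whatever came before it, so prepending an operation of the opposite team in front of the write cannot change the final state, and the two executions overlap in $\pi_i$ itself, so $j=i$ witnesses $R_{A,j}\cap R_{B,j}\neq\emptyset$. The paper simply takes the executions $q^0 \pi_i$ and $q^0 \pi_{i'} \pi_i$ for some $p_{i'}\in B$, both of which end in the written state $v$.

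There is, however, a technical flaw in your specific witnesses. Your execution $E_B = \pi_k \pi_i \pi_k$ applies $p_k$'s operation twice, and you justify this by asserting that ``the definition places no restriction on how many times a single $\pi_\ell$ may appear.'' That assertion is false: in Ruppert's framework the executions ranged over by $R_{A,j}$ and $R_{B,j}$ are those in which each process applies its chosen operation at most once. The rest of the paper depends on this --- the bound $k+k' \le \card{A} = n-1$ in the proof of Lemma~\ref{lemma-logical-shift-register-possibility} and the definition of $\Pi_A$ as ``all updates by processes in $A$'' in Lemma~\ref{lemma-shift-sizes} would be meaningless if operations could repeat, and the possibility result would in fact break. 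So $E_B$ as written is not an admissible execution. The repair is immediate: the trailing $\pi_k$ buys you nothing (overlap only requires the single common operation $\pi_i$), so drop it from both executions, leaving $E_A = q^0\pi_i$ and $E_B = q^0\pi_k\pi_i$, both ending in state $v$ and starting with operations of teams $A$ and $B$ respectively. With that one-line change your argument coincides with the paper's proof.
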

\begin{proof}
    Suppose $π_i$ is a \Write, and assume without loss of generality
    that $p_i∈A$. Let $π_{i'}$ be the update for some $p_{i'}∈B$. Then
    $q^0 π_{i'} π_i = q^0 π_i$, and the executions overlap in $π_i$.
\end{proof}

This leaves only shifts as possible update operations.
We can show that the updates $π_i$ divide neatly into
left shifts and right shifts depending on which team $p_i$ is in:
\begin{lemma}
    \label{lemma-team-direction}
    If $R_{A,j} ∩ B_{A,j} = ∅$ for all $j$, then there are no two
    processes $p_i,p_{i'}$ such that $p_i∈A$, $p_{i'}∈B$, and $π_i$ and $π_{i'}$
    are both left shifts or both right shifts.
\end{lemma}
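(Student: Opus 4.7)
The plan is to prove the contrapositive by exhibiting a pair of bad executions whenever two processes on opposite teams both perform shifts in the same direction. The critical ingredient will be that same-direction shifts commute: applying $\ell^a$ then $\ell^b$ yields the same state as applying $\ell^b$ then $\ell^a$, because both equal $\ell^{a+b}$; symmetrically for right shifts.

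Concretely, I would begin by assuming for contradiction that there are processes $p_i \in A$ and $p_{i'} \in B$ with $\pi_i$ and $\pi_{i'}$ both left shifts, say $\pi_i = \ell^{k}$ and $\pi_{i'} = \ell^{k'}$ (the right-shift case $\pi_i = r^k$, $\pi_{i'} = r^{k'}$ is completely analogous and I would dispatch it with one sentence at the end). I would then consider the two short executions
\[
    q^0\, \pi_i\, \pi_{i'} \qquad \text{and} \qquad q^0\, \pi_{i'}\, \pi_i,
\]
and invoke commutativity to note that both leave the object in the same state $q = q^0 \ell^{k+k'}$.

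The final step is to match this against the definition of the sets $R_{A,j}$ and $R_{B,j}$ recalled in the preceding discussion. The first execution begins with a step by $p_i \in A$ and contains $p_i$'s operation $\pi_i$, so the pair $\langle -, q\rangle$ (with no return value, since shifts return nothing) lies in $R_{A,i}$; the second begins with a step by $p_{i'} \in B$ and also contains $\pi_i$, placing the same pair in $R_{B,i}$. This contradicts the hypothesis $R_{A,i} \cap R_{B,i} = \emptyset$ and completes the argument.

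I do not expect any real obstacle here: commutativity of same-direction shifts is immediate from the definitions in Section~\ref{section-shift-registers}, and the only care needed is in mapping the two two-step executions to the formal definitions of $R_{A,j}$ and $R_{B,j}$ so that the desired $j$ (namely $j=i$) and the common final state are correctly identified.
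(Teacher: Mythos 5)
Your proposal is correct and matches the paper's proof essentially verbatim: both argue that same-direction shifts commute ($q^0\,\pi_i\,\pi_{i'} = q^0\,\ell^{k+k'} = q^0\,\pi_{i'}\,\pi_i$), so the two reorderings of the same pair of operations give a common final state witnessing $R_{A,i} \cap R_{B,i} \neq \emptyset$. The paper is just slightly terser about mapping back to the definition of the $R$ sets, noting only that ``overlap is immediate since the two executions include the same operations.''
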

\begin{proof}
    Suppose $i∈A$, $i'∈B$, $π_i=\ell^k$, and $π_{i'} = \ell^{k'}$ for
    some $k$ and $k'$. Then $q^0 π_i π_{i'} = q^0 \ell^k \ell^{k'}
    = q^0 \ell^{k+k'} = q^0 \ell^{k'} \ell_k = q^0 π_{i'} π_i$.
    Overlap is immediate since the two executions include the same
    operations.
    A similar argument applies when $π_{i} = r^k$ and $π_{i'} = r^{k'}$.
\end{proof}

Let us define team $A$ to consist of all processes that do
left shifts, and $B$ to consist of all processes that do right shifts.
So for each process $p_i ∈ A$, $π_i = \ell^{k_i}$,
and similarly for each process $p_i ∈ B$, $π_i = r^{k_i}$.

Requiring $n$-discernibility puts some size constraints on the $k_i$:
\begin{lemma}
    \label{lemma-shift-sizes}
    If $R_{A,j} ∩ B_{A,j} = ∅$ for all $j$, then each $k_i ≥ 1$, 
    $∑_{p_i ∈ A} k_i ≤ w-1$, and $∑_{p_i ∈ B} k_i ≤ w-1$.
\end{lemma}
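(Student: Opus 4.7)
The first claim, $k_i \ge 1$, is immediate: a logical shift register exposes $\ell^k$ and $r^k$ only for $k \ge 1$, and by Lemmas~\ref{lemma-no-writes} and~\ref{lemma-team-direction} every $\pi_i$ must be one of these shifts.

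For the sum bounds I would argue by contradiction. Set $L = \sum_{p_i \in A} k_i$ and suppose $L \ge w$; let $R = \sum_{p_i \in B} k_i$. Two observations drive the argument: left shifts commute with left shifts and right shifts commute with right shifts (so the order \emph{within} a team is irrelevant), and $\ell^L$ sends any state to $0^w$ once $L \ge w$, while $r^k$ fixes $0^w$ for every $k$. Pick any $p_{i_1} \in A$ and any $p_{i'_1} \in B$, and build two executions starting from an arbitrary initial state $q^0$. In Execution~1, $p_{i_1}$ acts first, then the remaining $A$-processes in any order, then every $B$-process in any order; by commutativity this collapses to $q^0 \ell^L r^R$, which equals $0^w$ by the observation. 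In Execution~2, $p_{i'_1}$ acts first, then every $A$-process, then the remaining $B$-processes; this collapses to $q^0 r^{k_{i'_1}} \ell^L r^{R - k_{i'_1}}$, which is again $0^w$ because $\ell^L$ annihilates whatever intermediate state $r^{k_{i'_1}}$ produces.

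Both executions perform $\pi_i$ once for each process $p_i$, so they trivially overlap in $\pi_j$ for every $j$; they begin with processes on opposite teams; and they reach the same final state $0^w$. This places a common pair into $R_{A,j} \cap R_{B,j}$ for every $j$, contradicting the hypothesis. Hence $\sum_{p_i \in A} k_i \le w-1$, and a symmetric argument, swapping the roles of $A$ and $B$ (and of $\ell$ and $r$), yields $\sum_{p_i \in B} k_i \le w-1$. The only real decision is choosing the right pair of executions; once one notices that $w$ consecutive shifts in either direction erase the register, the overlap and same-state conditions follow automatically, so I do not anticipate a real obstacle.
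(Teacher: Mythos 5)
Your proof is correct and follows essentially the same route as the paper: both arguments collapse team $A$'s updates into a single $\ell^{k_A}$ that sends every state to $0^w$ once $k_A \ge w$, so that prepending one $B$-operation yields a second execution with the same final state and overlapping operations. The only cosmetic differences are that you justify $k_i \ge 1$ directly from the type definition (the paper instead exhibits the bad pair of executions $q^0 \pi_i \pi_{i'}$ and $q^0 \pi_{i'}$ that a zero-length shift would create) and that you append team $B$'s remaining operations to both executions, which is harmless but unnecessary.
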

\begin{proof}
    If $k_i = 0$ for some $p_i ∈ A$, then for any $p_{i'} ∈ B$, $q^0
    π_i π_{i'} = q^0 π_{i'}$; the case where $p_i ∈ B$ is symmetric.

    Suppose $k_A = ∑_{p_i ∈ A} k_i ≥ w$. Let $Π_A = π_{i_1} π_{i_2} \dots
    π_{i_α} = \ell^k$ consist of all updates by processes in $A$. Let
    $π_{i'}$ be the update of some $p_{i'} ∈ B$. Then $q^0 Π_A = q^0
    π_{i'} Π_A = 0^w$, and these executions overlap in every operation
    in $Π_A$. The case where $k_B = ∑_{p_i ∈ B} k_i ≥ w$ is
    symmetric.
\end{proof}

We can also exclude certain combinations of left shifts and right
shifts that commute:
\begin{lemma}
    \label{lemma-rllr}
    If $R_{A,j} ∩ B_{A,j} = ∅$ for all $j$, then there do not exist
    disjoint nonempty subsets $A_1, A_2$ of $A$ and $B_1, B_2$ of $B$
    such that
    \begin{align*}
        ∑_{p_i∈A_1} k_i &= ∑_{p_i∈B_1} k_i
        \intertext{and}
        ∑_{p_i∈A_2} k_i &= ∑_{p_i∈B_2} k_i.
    \end{align*}
\end{lemma}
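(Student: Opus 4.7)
The plan is to argue by contradiction: suppose such disjoint nonempty subsets $A_1, A_2 \subseteq A$ and $B_1, B_2 \subseteq B$ exist, with matching sums $a_1 = \sum_{p_i \in A_1} k_i = \sum_{p_i \in B_1} k_i$ and $a_2 = \sum_{p_i \in A_2} k_i = \sum_{p_i \in B_2} k_i$. I will then exhibit two executions from $q^0$, one beginning with an operation of a process in $A$ and the other with an operation of a process in $B$, that share all of their operations and leave the register in the same state; this forces $R_{A,j} \cap R_{B,j} \ne \emptyset$ for any $p_j$ appearing in them.

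The crux is the following algebraic observation about the shift operators. Using postfix notation as in the paper, $\ell^a r^a$ acts on any state $x$ by zeroing out the top $a$ positions and leaving the bottom $w-a$ positions of $x$ unchanged, whereas $r^a \ell^a$ zeros out the bottom $a$ positions and leaves the top $w-a$ positions unchanged. Because the two composite operators affect disjoint portions of the register, they commute as functions on states for any $a$ and $b$:
\[
    (\ell^{a} r^{a})(r^{b} \ell^{b}) \;=\; (r^{b} \ell^{b})(\ell^{a} r^{a}).
\]

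Writing $\Pi_S$ for a composition of the updates by all processes in $S$ in any order (well defined because all left shifts commute with one another and all right shifts commute with one another), this commutation yields
\[
    q^0 \, \Pi_{A_1} \Pi_{B_1} \Pi_{B_2} \Pi_{A_2}
    \;=\;
    q^0 \, \Pi_{B_2} \Pi_{A_2} \Pi_{A_1} \Pi_{B_1},
\]
since the left-hand side evaluates to $q^0 (\ell^{a_1} r^{a_1})(r^{a_2} \ell^{a_2})$ and the right-hand side to $q^0 (r^{a_2} \ell^{a_2})(\ell^{a_1} r^{a_1})$. The left execution begins with an operation of a process in $A_1 \subseteq A$ and the right execution with an operation of a process in $B_2 \subseteq B$, and both consist of exactly the updates of the processes in the nonempty set $A_1 \cup A_2 \cup B_1 \cup B_2$. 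In particular they overlap in every such $\pi_j$, so $\Tuple{r,q}$ lies in both $R_{A,j}$ and $R_{B,j}$ for each such $j$, contradicting the hypothesis.

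The step I expect to need the most care is establishing the commutation cleanly, which amounts to a direct index-chasing computation showing that $\ell^a r^a$ preserves the bottom $w-a$ positions exactly (and symmetrically for $r^a \ell^a$). Once this \emph{top/bottom clearing} picture of the paired operators is in hand, the rest of the argument is just choosing the two interleavings above and reading off that they differ only in the order of two commuting actions on $q^0$.
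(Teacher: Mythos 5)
Your proof is correct and takes essentially the same route as the paper: the paper exhibits exactly the same two interleavings $q^0\,\Pi_{A_1}\Pi_{B_1}\Pi_{B_2}\Pi_{A_2}$ and $q^0\,\Pi_{B_2}\Pi_{A_2}\Pi_{A_1}\Pi_{B_1}$ and checks by explicit index computation that both leave the register in the same state, which is precisely a verification of your observation that $\ell^{k}r^{k}$ and $r^{k'}\ell^{k'}$ act as commuting top-clearing and bottom-clearing masks. The overlap and contradiction with $R_{A,j}\cap R_{B,j}=\emptyset$ are then drawn exactly as you do.
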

\begin{proof}
    Suppose otherwise. For any set of processes $C$, let $Π_C$ be
    a sequence of updates $π_{i_1} \dots π_{i_α}$ where $i_j$
    enumerates the elements of $C$, and let $k_C = ∑_{p_i∈C} k_i$.
    Write $k$ for $k_{A_1} = k_{B_1}$ and $k'$ for $k_{A_2} =
    k_{B_2}$.

    We can replace the $k$ leftmost and $k'$ rightmost symbols in $q^0$
    with zeros
    by applying $Π_{A_1} Π_{B_1}$ and $Π_{B_2} Π_{A_2}$ in either
    order:
    \begin{align*}
        q^0 \parens*{Π_{A_1} Π_{B_1}} \parens*{Π_{B_2} Π_{A_2}}
        &= \parens*{q^0_{w-1} \dots q^0_0} \parens*{\ell^{k} r^{k}} \parens*{r^{k'} \ell^{k'}}
        \\&= \parens*{q^0_{w-k-1} \dots q^0_0\, 0^k} r^{k} \parens*{r^{k'} \ell^{k'}}
        \\&= \parens*{0^k q^0_{w-k-1} \dots q^0_0} \parens*{r^{k'} \ell^{k'}}
        \\&= \parens*{0^{k+k'} q^0_{w-k-1} \dots q^0_k} \ell^{k'}
        \\&= \parens*{0^k q^0_{w-k-1} \dots q^0_k\, 0^{k'}},
        \intertext{and similarly}
        q^0  \parens*{Π_{B_2} Π_{A_2}} \parens*{Π_{A_1} Π_{B_1}}
        &= \parens*{q^0_{w-1} \dots q^0_0} \parens*{r^{k'} \ell^{k'}} \parens*{\ell^{k} r^{k}}
        \\&= \parens*{q^0_{w-1} \dots q^0_k\, 0^{k'}} \parens*{\ell^{k} r^{k}}
        \\&= \parens*{0^k q^0_{w-k-1} \dots q^0_k\, 0^{k'}}
        \\&= q^0 \parens*{Π_{A_1} Π_{B_1}} \parens*{Π_{B_2} Π_{A_2}}.
    \end{align*}
    Since these two executions start with operations of teams $A$ and
    $B$ respectively, they demonstrate $R_{A,j} ∩ R_{B,j} ≠ ∅$ for any
    $j$ in $A_1∪A_2∪B_1∪B_2$.
\end{proof}

The final step is a counting argument showing that any
collection of $n=w+1$ values $k_i$ that satisfy the constraints of
Lemma~\ref{lemma-shift-sizes} violate the constraints of
Lemma~\ref{lemma-rllr}:
\begin{lemma}
    \label{lemma-logical-shift-register-impossibility}
    A logical shift register with width $w$ is not $n$-discerning for any $n≥w+1$.
\end{lemma}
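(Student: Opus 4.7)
The plan is to combine the structural constraints from Lemmas~\ref{lemma-no-writes} through~\ref{lemma-shift-sizes} with a pigeonhole argument on partial sums, producing exactly the forbidden configuration ruled out by Lemma~\ref{lemma-rllr}. From the earlier lemmas I may assume that every process in $A$ executes some $\ell^{k_i}$ and every process in $B$ executes some $r^{k_i}$, with each $k_i \ge 1$, $\sum_{p_i \in A} k_i \le w-1$, and $\sum_{p_i \in B} k_i \le w-1$. Writing $a = \abs{A}$ and $b = \abs{B}$, the hypothesis $n \ge w+1$ gives $a + b \ge w+1$.

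I fix an arbitrary ordering of the elements of $A$ as $a_1,\ldots,a_a$ and of $B$ as $b_1,\ldots,b_b$, and form the strictly increasing sequences of partial sums $0 = P^A_0 < P^A_1 < \cdots < P^A_a$ and $0 = P^B_0 < P^B_1 < \cdots < P^B_b$. Since all shift totals are at most $w-1$, both sequences are subsets of the $w$-element set $\{0, 1, \ldots, w-1\}$, of sizes $a+1$ and $b+1$ respectively. By inclusion-exclusion, their intersection has size at least $(a+1) + (b+1) - w \ge 3$, so besides $0$ there are at least two positive values $v_1 < v_2$ common to both partial-sum sequences.

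Writing $v_1 = P^A_{i_1} = P^B_{j_1}$ and $v_2 = P^A_{i_2} = P^B_{j_2}$, strict monotonicity forces $0 < i_1 < i_2$ and $0 < j_1 < j_2$. I then take the consecutive blocks $A_1 = \{a_1,\ldots,a_{i_1}\}$ and $A_2 = \{a_{i_1+1},\ldots,a_{i_2}\}$, and define $B_1, B_2$ analogously on the $B$ side. These are disjoint nonempty subsets satisfying $\sum_{p_i \in A_1} k_i = v_1 = \sum_{p_i \in B_1} k_i$ and $\sum_{p_i \in A_2} k_i = v_2 - v_1 = \sum_{p_i \in B_2} k_i$, directly violating Lemma~\ref{lemma-rllr} and yielding the required contradiction.

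The substantive step is the pigeonhole observation: both partial-sum sequences sit inside the same $w$-element window $\{0,1,\ldots,w-1\}$, and the size constraint $a+b \ge w+1$ forces their intersection to contain at least three elements. Once two nontrivial break-points are available on each side, the telescoping construction of the two matched pairs is automatic, and the rest of the proof is just bookkeeping.
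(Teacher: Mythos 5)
Your proposal is correct and follows essentially the same route as the paper's own proof: reduce to left shifts on one team and right shifts on the other via the earlier lemmas, then apply inclusion-exclusion to the partial-sum sets to extract two matched break-points that contradict Lemma~\ref{lemma-rllr}. The only cosmetic difference is that you include $0$ in the partial-sum sequences (so you need three common elements rather than two), which is equivalent bookkeeping.
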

\begin{proof}
    Suppose that it is $n$-discerning for some $n≥w+1$.
    Then we have a choice of $A$, $B$, and $π_i$ such that
    $R_{A,j} ∩ R_{B,j} = ∅$ for all $j$. From
    Lemmas~\ref{lemma-no-writes} and~\ref{lemma-team-direction}
    we can assume that each $π_i$ for $p_i∈A$ is of the form
    $\ell^{k_i}$ and similarly each $π_i$ for $p_i ∈ B$ is of the form
    $r^{k_i}$. We also have from Lemma~\ref{lemma-shift-sizes} that
    each $k_i$ is at least $1$ and that the sum of the $k_i$ values
    for each team is at most $w-1$. We will show that any choice of
    $k_i$ satisfying these constraints contradicts
    Lemma~\ref{lemma-rllr}.

    Let 
    $A = \Set{p_{i_1}, p_{i_2}, \dots, p_{i_α}}$ 
    and
    $B = \Set{p_{i'_1}, p_{i'_2}, \dots, p_{i'_β}}$.
    Define $L_i = ∑_{j=1}^i k_{i_j}$ and $R_i = ∑_{j=1}^i k_{i'_j}$.
    Let $L = \Set{L_i}$ and $R = \Set{R_i}$.
    Because each $k_i ≥ 1$, all values $L_i$ are distinct, so
    $\card*{L} = \card*{A}$, and similarly $\card*{R} =
    \card*{B}$. Because the sum of the $k_i$ for each set is bounded
    by $w-1$, we also have that $L ∪ R ⊆ \Set{1,\dots,w-1}$, which
    implies $\card*{L∪R} ≤ w-1$.

    Now apply inclusion-exclusion to get $\card*{L∩R} = \card*{L} +
    \card*{R} - \card*{L∪R} ≥ n - (w-1) ≥ 2$ when $n ≥ w+1$.
    Pick two distinct values $L∩R$. These are given by indices $a<a'$ and $b<b'$
    that generate partial sums $L_a = R_b$ and $L_{a'} = R_{b'}$. Let
    \begin{align*}
        A_1 &= \Set{i_1,\dots,i_a} \\
        A_2 &= \Set{i_{a+1},\dots,i_{a'}} \\
        B_1 &= \Set{i_1,\dots,i_b} \\
        B_2 &= \Set{i_{b+1},\dots,i_{b'}}
        \intertext{then}
        ∑_{p_i∈A_1} k_i &= L_a = R_b = ∑_{p_i ∈ B_1} k_i
        \intertext{and}
        ∑_{p_i∈A_2} k_i &= L_{a'} - L_a = R_{b'} - R_b = ∑_{p_i ∈ B_2}
        k_i.
    \end{align*}
    But this is forbidden by Lemma~\ref{lemma-rllr}.
\end{proof}

Next, we argue that a $w$-bit logical shift register is
$w$-discerning. Given the constraints imposed by
Lemmas~\ref{lemma-no-writes} through~\ref{lemma-rllr}, there are not
too many choices to be made here.
\begin{lemma}
    \label{lemma-logical-shift-register-possibility}
    A logical shift register with width $w$ and an alphabet of size
    $2$ or greater is $n$-discerning for any
    $n$ such that $2≤n≤w$.
\end{lemma}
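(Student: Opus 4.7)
The plan is to exhibit a specific $n$-discerning configuration, guided by the constraints forced by Lemmas~\ref{lemma-no-writes}--\ref{lemma-rllr}: team $A$ performs only left shifts, team $B$ only right shifts, shift sizes are positive, and the partial-sum sequences for the two teams must be ``separated'' enough to avoid the coincidences ruled out by Lemma~\ref{lemma-rllr}. For every $n$ with $2 ≤ n ≤ w$, the simplest feasible choice is $A = \Set{p_1}$ with $π_1 = \ell^{n-1}$ and $B = \Set{p_2, \dots, p_n}$ with each $π_i = r$. The team sums are each $n-1 ≤ w-1$ since $n ≤ w$, and the partial-sum sets $L = \Set{n-1}$ and $R = \Set{1, \dots, n-1}$ share only one value, so no pair of coincident partial sums arises.

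The key creative step is the choice of initial state: I would take $q^0$ to be the string whose only non-null symbol sits in position $0$, i.e., $q^0 = 0^{w-1} 1$, using any non-null alphabet letter in place of $1$ when the alphabet is larger. The point of this $q^0$ is that a right shift is instantly fatal: $q^0 r = 0^w$, and $0^w$ is a fixed point of every shift operation. Consequently, every execution whose first operation lies in $B$ leaves the register in state $0^w$, so the state component of every pair in $R_{B,j}$ equals $0^w$, regardless of $j$.

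Any execution contributing to $R_{A,j}$ must instead begin with $π_1 = \ell^{n-1}$, which repositions the non-null symbol to coordinate $n-1$, yielding $0^{w-n} 1 0^{n-1}$. The at most $n-1$ subsequent right shifts simply slide that symbol further to the right without ever knocking it off the register, so the resulting state is always of the form $0^{w-n+m} 1 0^{n-1-m}$ with $m ∈ \Set{0, \dots, n-1}$, which is nonzero. Hence $R_{A,j} ∩ R_{B,j} = ∅$ for every $j$, and the register is $n$-discerning. I do not expect a serious obstacle: the preceding lemmas have pinned the structure down to essentially this construction, and the ``kill the low symbol'' trick makes the final disjointness check immediate once one notices that $r$ collapses $q^0$ to a fixed point while $\ell^{n-1}$ preserves it.
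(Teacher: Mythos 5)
Your construction is correct, and it is essentially the paper's own construction reflected left-to-right: the paper also plants a single non-null symbol at one end of the register ($q^0 = 1\,0^{w-1}$), makes the singleton team's shift kill it instantly, and checks that every execution beginning with the large team keeps it on the board. The one substantive difference is that you make the singleton team $A$ perform the multi-position shift $\ell^{n-1}$ (which you genuinely need, since a single $\ell$ would leave the marker at position $1$, where two subsequent $r$'s would destroy it), whereas the paper assigns the large team single right shifts $r$ and the singleton a single left shift $\ell$, so that only one-position shifts are ever used. The paper explicitly advertises this after Theorem~\ref{theorem-logical-shift-register}: the possibility result holds even for registers supporting only single-position shifts, a slightly stronger statement that your variant does not deliver. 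In exchange, your mirrored choice makes the case analysis marginally cleaner, since with $\card*{A}=1$ every execution starting in $A$ has the fixed form $\ell^{n-1}$ followed by some number of $r$'s, avoiding the paper's $r^k \ell r^{k'}$ interleaving case.
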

\begin{proof}
    Fix $n$ such that $2≤n≤w$. Assign processes $p_1,\dots,p_{n-1}$ to
    team $A$ and process $p_n$ to team $B$. Let $π_i = r$ for each
    $p_i ∈ A$, and let $π_n = \ell$. Let $q^0 = 1 0^{w-1}$.

    We have two classes of executions depending on which team goes
    first:
    \begin{enumerate}
        \item An execution starting with an operation by team $A$ will
            be of the form $q^0 r^k$ or $q^0 r^k \ell r^{k'}$
            where in either case $k≥1$ and $k+k' ≤ \card{A} = n-1 ≤
            w-1$.
            The fact that $k+k' ≤ w-1$ means that the
            initial $1$ in $q^0_0$ is not shifted off the right end of
            the register; in the second case, the fact that at least
            one $r$ precedes $\ell$ means that it is not shifted off
            the left end either. So an execution with an initial operation by team $A$
            always leaves a nonzero state in the register.
        \item An execution starting with an operation by team $B$ will
            be of the form $q^0 \ell r^k$. Now $q^0 \ell$ immediately
            removes the $1$, leaving $0^w$, and subsequent right shifts
            have no effect. So an execution with an initial operation
            by team $B$ always leaves a zero state in the register.
    \end{enumerate}
    Since the first case only gives nonzero states and the second
    only gives zero states, we have
    $R_{A,j} ∩ R_{B,j} = ∅$ for any $j$, giving $n$-discernability.
\end{proof}

Combining Lemmas~\ref{lemma-logical-shift-register-impossibility}
and~\ref{lemma-logical-shift-register-possibility} with the previous
results of~\cite{Ruppert2000} gives:
\begin{theorem}
    \label{theorem-logical-shift-register}
    The consensus number of a width-$w$ logical shift register is $w$.
\end{theorem}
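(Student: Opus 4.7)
The proof is essentially a bookkeeping step, stitching together the two lemmas just proved with Ruppert's characterization. The plan is to invoke the biconditional from~\cite{Ruppert2000} (cited in the paper as Theorems~15 and~18 there): a readable type, together with an unbounded supply of read/write registers, solves wait-free $n$-process consensus for $n \geq 2$ if and only if the type is $n$-discerning. Since a logical shift register trivially supports \Read, this characterization applies.

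First I would handle the lower bound. By Lemma~\ref{lemma-logical-shift-register-possibility}, a width-$w$ logical shift register is $n$-discerning for every $n$ with $2 \leq n \leq w$, so Ruppert's equivalence yields a wait-free $w$-process consensus protocol using shift registers and read/write registers. Hence the consensus number is at least $w$. (The $n=1$ case is vacuous, since any object solves consensus for a single process.)

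Next I would handle the upper bound. By Lemma~\ref{lemma-logical-shift-register-impossibility}, a width-$w$ logical shift register fails to be $n$-discerning whenever $n \geq w+1$. Ruppert's equivalence then rules out any wait-free protocol for $n$-process consensus using shift registers and read/write registers for $n \geq w+1$. So the consensus number is at most $w$.

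Combining these bounds gives the theorem. I do not expect any real obstacle: all of the combinatorial content is already packed into Lemmas~\ref{lemma-no-writes}--\ref{lemma-logical-shift-register-possibility}, and the only thing left is to cite Ruppert and observe that the two lemmas pinch the consensus number from above and below to exactly $w$. The one place to be mildly careful is noting that Ruppert's characterization requires $n \geq 2$, which is why the $n=1$ case must be mentioned separately (and is immediate).
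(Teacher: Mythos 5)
Your proposal is correct and matches the paper's argument exactly: the paper likewise obtains Theorem~\ref{theorem-logical-shift-register} by combining Lemmas~\ref{lemma-logical-shift-register-impossibility} and~\ref{lemma-logical-shift-register-possibility} with Ruppert's characterization of $n$-discerning readable types. Your explicit note about the $n\ge 2$ restriction and the trivial $n=1$ case is a small extra care the paper leaves implicit, but it changes nothing substantive.
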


There is a bit of asymmetry in
Lemmas~\ref{lemma-logical-shift-register-impossibility}
and~\ref{lemma-logical-shift-register-possibility} that is not fully
exepressed in Theorem~\ref{theorem-logical-shift-register}: 
the impossibility result applies to shift registers 
allowing arbitrary shift operations on digits of any
size, while the possibility result uses only single-position
shifts on bits.

\section{Arithmetic shift registers}
\label{section-arithmetic-shift-registers}

An arithmetic shift register supports operations
$\ell^k$ and $s^k$ for any $k≥1$. For this class of registers, the consensus number is
infinite for any $w≥2$:
\begin{theorem}
    \label{theorem-arithmetic-shift-register}
    Let $w≥2$. Then a $w$-bit arithmetic shift register solves
    consensus for any fixed number of processes $n$.
\end{theorem}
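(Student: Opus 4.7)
The plan is to apply Ruppert's characterization in the positive direction, showing that a width-$w$ arithmetic shift register is $n$-discerning for every $n ≥ 2$, whence $n$-consensus solvability follows from \cite[Theorems~15 and~18]{Ruppert2000}. This mirrors the approach used in Section~\ref{section-logical-shift-registers}, except that here I only need to exhibit a single uniform witness that works for every $n$, rather than rule all possible witnesses out.

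The witness I would use is the initial state $q^0 = 1\,0^{w-1}$, the team partition $A = \{p_1\}$ and $B = \{p_2, \dots, p_n\}$, and the operation assignment $π_1 = \ell$ together with $π_i = s$ for every $i ≥ 2$. The guiding intuition is that because the arithmetic right shift $s$ replicates the leading bit instead of clearing it, once the MSB is $1$ it stays $1$ under further $s$ operations, and each $\ell$ can decrement the ``level'' (the number of leading $1$s) by only one.

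In more detail, I would argue two claims. First, any execution whose first operation is $π_1$ sends the register to $0^w$ immediately, since $\ell$ applied to $1\,0^{w-1}$ yields $0^w$ and every subsequent $s$ is a no-op there; hence every $A$-first execution ends in state $0^w$. Second, any execution whose first operation comes from team $B$ leaves the register in a state of the form $1^k\,0^{w-k}$ with $k ≥ 1$: the first $s$ produces $11\,0^{w-2}$, lifting the level to $2$; only $p_1$ ever issues an $\ell$, so there is at most one $\ell$ in the whole execution; and at the moment that single $\ell$ occurs, the level is already at least $2$, so afterwards it is at least $1$ and every subsequent $s$ can only raise it. This is the only place the hypothesis $w ≥ 2$ is used: it provides exactly the one level of headroom that the lone $\ell$ consumes, and indeed the argument fails for $w = 1$.

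Combining the two cases, for every $j$ the set $R_{A,j}$ consists solely of pairs whose state component is $0^w$, while every state appearing in $R_{B,j}$ has leading bit $1$, giving $R_{A,j} ∩ R_{B,j} = ∅$ as required. The main subtlety I anticipate is the bookkeeping around how the single $\ell$ interleaves with arbitrarily many $s$ operations, which I would handle by observing that the set $\{1^k\,0^{w-k} : 0 ≤ k ≤ w\}$ is closed under both shifts, contains $q^0$, and that $s$ and $\ell$ act on it as a saturating increment and decrement of the level $k$; so the whole analysis reduces to tracking a one-dimensional saturating counter that remains in $[1,w]$ for $B$-first executions and is pinned at $0$ for $A$-first executions.
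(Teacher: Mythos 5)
Your proposal is correct and is essentially the paper's own proof: the same initial state $1\,0^{w-1}$, the same assignment of one $\ell$-process against $n-1$ $s$-processes (you merely swap which team is called $A$ and which $B$, which is immaterial), and the same disjointness criterion of zero versus nonzero final states, with your saturating-counter invariant being a clean repackaging of the paper's explicit case analysis of executions of the form $q^0 s^k$ and $q^0 s^k \ell s^{k'}$.
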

\begin{proof}
    We will prove this by showing that such an object is
    $n$-discerning for any $n≥2$. The construction is similar to the
    proof of Lemma~\ref{lemma-logical-shift-register-possibility}, but
    in this case the fact that $s$ preserves the leftmost bit means
    that we don't need to worry about doing too many right shifts.

    Fix $n$ such that $2≤n≤w$. Assign processes $p_1,\dots,p_{n-1}$ to
    team $A$ and process $p_n$ to team $B$. Let $π_i = s$ for each
    $p_i ∈ A$, and let $π_n = \ell$. Let $q^0 = 1 0^{w-1}$.
    Fix some $j$, and consider the set of executions of the form $q^0
    π_{i_1}\dots π_{i_α}$ that include $π_j$.

    Any execution that starts with $q^0 \ell$ immediately reaches a zero
    state that is preserved by any subsequent $s$ operations. So
    $R_{B,j}$ contains only zero states.

    An execution that starts with $q^0 s$ is either of the form $q^0
    s^k$ or $q^0 s^k \ell s^{k'}$ where $k≥1$ in both cases and $k'≥0$
    in the second case. The state following $q^0 s^k$ is $1^{k+1}
    0^{w-k-1}$ when $k < w$ and $1^k$ otherwise. If there is no $\ell$
    operation, this gives a nonzero state. If there is an $\ell$
    operation, it removes at most one of the $\min(k+1,w)≥2$ one bits
    from $q^0 s^k$, leaving $(q^0 s^k \ell)_{w-1} = 1$. The $k'$
    subsequent $s$ operations preserve this one, again giving a
    nonzero state.  Together these two cases show that
    $R_{A,j}$ contains only nonzero states.

    Since $R_{A,j}$ contains only nonzero states and $R_{B,j}$ contains
    only zero states, $R_{A,j} ∩ R_{B,j} = ∅$ as required.
\end{proof}

It is worth noting that while the definition of an arithmetic shift
register allows shifts $\ell^k$ and $s^k$ of arbitrary size, the proof of
Theorem~\ref{theorem-arithmetic-shift-register} only uses the
one-position shift operations $\ell$ and $s$.

The assumption that $w≥2$ is necessary: for a one-bit shift
register, $s$ has no effect, leaving us with only $\Write$ and $\ell$
operations as non-trivial updates. We know from
Lemmas~\ref{lemma-no-writes} and~\ref{lemma-team-direction} that these
are not enough.

\section{Conclusions}
\label{section-conclusions}

We have shown that the consensus number of a register supporting
logical shift operations is equal to its width, giving a
class of familiar shared-memory objects that spans the entire
consensus hierarchy.

This result is similar to the previous results of
Mostefaoui~\etal~\cite{MostefaouiPR2018} and
Ellen~\etal~\cite{EllenGSZ2020} for sliding window
registers/$\ell$-buffers.  The main difference is that these objects
only have one-directional shifts, and rely on processes being able to
choose the new value to be shifted in. Logical shift registers allow
shifting in two directions, which allows them to solve consensus even
though the newly inserted value is fixed. For both classes of objects,
what limits the consensus number to an object's width $w$ is the fact
that doing $w$ shifts erases any previous information.

For arithmetic shift registers, this is no longer the case: sign
extension means that the most significant bit is preserved by
arithmetic right shifts. So, as long as arithmetic shift registers
are wide enough that right shifts are a non-trivial operation, they
can solve consensus for any fixed $n$.

More broadly, the present work complements the program of
Ellen~\etal~\cite{EllenGSZ2020} to compute the consensus power of
registers supporting common machine code instructions subject to
constraints on the number of available objects. Ellen~\etal~consider
registers that support various operations on arbitrary natural
numbers, but their arguments show that many common operations on bounded-size
registers are sufficient to solve consensus for $n$ processes, as long
as the bound is sufficiently large as a function of $n$. A natural
question is which other instructions might give consensus numbers that
scale with the word size.

\bibliographystyle{alpha}
\bibliography{paper.bib}

\end{document}